\documentclass[letterpaper, 10 pt, conference]{ieeeconf}  

\IEEEoverridecommandlockouts                              
\usepackage{graphics} 
\usepackage{booktabs}
\usepackage{amsmath}
\usepackage{amssymb}
\usepackage{graphicx}
\usepackage{algorithm}
\usepackage{algorithmic}
\usepackage{bm}
\usepackage{mathtools}
\usepackage{tikz}
\usepackage{enumerate}
\usetikzlibrary{arrows.meta,positioning,fit,calc,backgrounds,shadows}
\usepackage{stfloats}

\newtheorem{theorem}{Theorem}

\newtheorem{remark}{Remark}

\newcommand{\R}{\mathbb{R}}

\newcommand{\Prob}{\mathbb{P}}

\newcommand{\cA}{\mathcal{A}}

\newcommand{\DeltaTheta}{\Delta(\Theta)}

\title{\LARGE \bf
Fast Strategy Solving for the Informed Player in Two-Player Zero-Sum Linear-Quadratic Differential Games with One-Sided Information}

\author{Mukesh Ghimire, Zhe Xu, Yi Ren
\thanks{M. Ghimire, Z. Xu, and Y. Ren are with the Department of Mechanical and Aerospace Engineering at
        Arizona State University, Tempe, AZ 85281, USA.
        {\tt\small \{mghimire, xzhe1, yiren\}@asu.edu}}%
}

\begin{document}

\maketitle
\thispagestyle{empty}
\pagestyle{empty}

\begin{abstract}
We study finite-horizon two-player zero-sum differential games with one-sided payoff information ($G$), where the informed player (P1) knows the game payoff, while P2 only has a public belief over a finite set of possible payoffs.
In this case, P1's Nash equilibrium (NE) behavioral strategy may control the release of the type information or even resort to manipulate P2's belief. 
Previous studies revealed an atomic structure of the NE of $G$ with general nonlinear dynamics and payoffs, leading to tractable NE approximation.
Implementing such approximation schemes for real-time sub-game solving, however, has not been achieved, yet is desired for applications where sim-to-real gaps exist and robust control is required.
This paper improves the computational efficiency of sub-game solving for P1 during $G$ with linear dynamics and quadratic losses.
Specifically, we show that P1's NE computation can be formulated as a bi-level optimization problem where the outer level optimizes the ``signaling'' strategy, i.e., when and how to reveal information through control, and the inner level is a game-tree LQR that solves for the optimal closed-loop control. 
This bi-level problem is solved via an adjoint-enabled backpropagation scheme: A ``backward'' LQR pass is followed by a ``forward'' gradient descent pass for improving the signaling.
We apply the proposed algorithm to approximate NEs for variants of a homing problem with a 8D state space, 2D action spaces, and a discrete time horizon of $K=10$. The algorithm achieves $\approx$10Hz sub-game solving, enabling robust game-theoretic planning under information asymmetry and random disturbances.
\end{abstract}


\section{INTRODUCTION} 
Two-player zero-sum differential games with one-sided information ($G$) arise in safety-critical scenarios in defense~\cite{garcia2018design, garcia2021complete, liang2019differential}, cybersecurity~\cite{durkota2017optimal, mc2016data, horak2016point, van2013flipit}, and finance~\cite{kyle1985continuous, back1992insider, wang2020market}. When only one player observes the payoff type, optimal play involves both control and \emph{information design}: the informed player may act cautiously to avoid revealing its type too early, or deliberately reveal to induce advantageous responses.
Existing algorithms for solving $G$~\cite{ghimire2025solvingfootballexploitingequilibrium, ghimirestate} identified the atomic NE structure of $G$, but real-time subgame solving, e.g., in robotics applications where sim-to-real gap exists, is not yet achieved.
To this end, this work focuses on a subclass of $G$ with \emph{linear dynamics, quadratic costs} and where the number of payoff types $I$ is \emph{small} (denoted LQ $G$). 
We investigate the effectiveness of a game solver that combines the two structural insights to enable sub-second strategy computation in LQ $G$: (1) the atomic NE structure reduces the game tree branching factor from the action size (infinite for a continuous action space) to at most $I$ per infostate (tree node), resulting in sub-game solving as a minimax problem on an $I$-ary game tree, and (2) the LQ structure ensures that the minimax problem can be solved via a differentiable Riccati recursion. 

The main contributions are:
\begin{enumerate}
    \item The formulation of P1's NE strategy learning as a bi-level optimization problem where a belief-induced LQ game is nested in signaling optimization.

    \item An adjoint-enabled backpropagation algorithm that differentiates the Riccati recursion to optimize signaling policy, resulting in $\approx$10Hz sub-game solving for a homing game with an 8D state space, 2D action spaces, and a time horizon of $K=10$.

    \item Two case studies of the homing game: we use Hexner's game with an analytical NE to verify the correctness of our algorithm, and a ``drone landing under attack'' scenario to demonstrate the value of fast sub-game solving under P2's random velocity disturbances.
\end{enumerate}

\section{RELATED WORK}
Games with incomplete information have been studied extensively since Harsanyi's foundational work on Bayesian games~\cite{harsanyi1967games}. In the zero-sum setting, Aumann and Maschler~\cite{aumann1995repeated} characterized optimal strategies in repeated games with one-sided information, establishing that there are scenarios when the informed player should reveal information gradually through a carefully designed signaling scheme. 
For differential games, Cardaliaguet~\cite{cardaliaguet2007differential, cardaliaguet2009numerical} established value existence and characterized the value functions for games with asymmetric information using viscosity solution methods. Ghimire et al.~\cite{ghimire2025solvingfootballexploitingequilibrium} built on these ideas to propose a scalable solver for games with one-sided information and nonlinear dynamics and costs. The key insight from these studies is that the optimal strategy for the informed (resp. uninformed) player entails randomizing from at most $I$ (resp. $I+1$) actions at any game tree node, enabling efficient strategy computation for large-scale games such as football. 
Hexner~\cite{hexner1979differential} independently studied LQ $G$ through a homing game, where an evader with a private target attempts to minimize terminal distance, while a pursuer infers the target and maximizes it. With an LQ structure, Hexner derived the analytical solution to optimal signaling and feedback control. 

\section{PROBLEM FORMULATION}
\label{sec:formulation}

\begin{figure*}[htbp]
\vspace{0.05in}
\centering
\resizebox{\textwidth}{!}{%
\begin{tikzpicture}[
    font=\sffamily,
    >=Stealth,
    box/.style={
        rectangle,
        draw=gray!40,
        top color=white,
        bottom color=gray!5,
        rounded corners=3pt,
        align=center,
        inner sep=5pt,
        drop shadow={opacity=0.15, shadow xshift=1mm, shadow yshift=-1mm}
    },
    optbox/.style={box, bottom color=green!5, draw=green!40, thick, text width=35mm, minimum height=12mm},
    varbox/.style={box, bottom color=blue!5, draw=blue!40, thick, text width=35mm, minimum height=12mm},
    procbox/.style={box, bottom color=orange!5, draw=orange!40, thick, text width=35mm, minimum height=12mm},
    treenode/.style={box, bottom color=white, draw=gray!60, text width=23mm, minimum height=8mm, font=\small},
    line/.style={draw, thick, -{Stealth[length=2.5mm, width=2mm]}, gray!80},
    dashline/.style={draw, thick, dashed, -{Stealth[length=2.5mm, width=2mm]}, gray!80},
    zoomline/.style={draw, dashed, color=orange!60, thick},
    lbl/.style={midway, align=center, font=\scriptsize\itshape, text=black!70, fill=white, inner sep=1.5pt}
]

    
    \node[optbox] (opt) at (0, 0) {
        \textbf{Outer Optimizer}\\
        Minimize $V_{0,\emptyset}$\\
        $\phi \leftarrow \phi - \eta\nabla_\phi V_{0, \emptyset}$
    };
    
    \node[varbox] (alpha) at (0, -2.5) {
        \textbf{Outer Variables}\\
        Logits $\phi \to$ Split $\alpha$\\
        $\alpha \in \Delta(\cA)$
    };
    
    \node[procbox] (inner_macro) at (0, -5) {
        \textbf{Inner Problem}\\
        Evaluate $V_{0, \emptyset}(x_0)$ and $\nabla_\phi V_{0, \emptyset}$\\
        (Tree Recursion)
    };
    
    \draw[line] (opt) -- node[lbl] {Update $\phi$} (alpha);
    \draw[line] (alpha) -- node[lbl] {Fix $\alpha$} (inner_macro);
    
    \draw[line] (inner_macro.west) -- ++(-1.6,0) |- node[lbl, pos=0.25] {Return $V, \nabla V$} (opt.west);

    \begin{pgfonlayer}{background}
        \node[draw=gray!20, fill=gray!2, rounded corners=5pt, fit=(opt)(inner_macro)(alpha), 
              inner sep=12pt, label={[font=\bfseries\large, gray, yshift=-4pt]above:A. Optimization Loop}] (macro_bg) {};
    \end{pgfonlayer}

    
    \node[treenode] (root) at (10, 0) {
        \textbf{Public Root}\\
        $(0,\emptyset)$\\
        Belief $p_{0}$
    };

    \node[treenode] (n1) at (6.8, -2.5) {
        \textbf{Node}\\
        $(1,1)$\\
        $p_{1,1}$
    };
    \node[treenode] (n2) at (13.2, -2.5) {
        \textbf{Node}\\
        $(1,2)$\\
        $p_{1,2}$
    };

    \node[treenode] (n11) at (5.0, -5) {
        \textbf{Leaf}\\
        $(2,11)$
    };
    \node[treenode] (n12) at (8.2, -5) { 
        \textbf{Leaf}\\
        $(2,12)$
    };
    \node[treenode] (n21) at (11.8, -5) {
        \textbf{Leaf}\\
        $(2,21)$
    };
    \node[treenode] (n22) at (15.0, -5) {
        \textbf{Leaf}\\
        $(2,22)$
    };

    \draw[line] (root) -- node[lbl, left=2pt] {$a_0=1$\\$\lambda^1$} (n1);
    \draw[line] (root) -- node[lbl, right=2pt] {$a_0=2$\\$\lambda^2$} (n2);
    
    \draw[line] (n1) -- node[lbl, left=1pt] {1} (n11);
    \draw[line] (n1) -- node[lbl, right=1pt] {2} (n12);
    \draw[line] (n2) -- node[lbl, left=1pt] {1} (n21);
    \draw[line] (n2) -- node[lbl, right=1pt] {2} (n22);

    \node[align=left, font=\scriptsize, color=blue!60!black, anchor=west] (bayes_txt) at (11.6, -0.2) {
        $\leftarrow$ \textbf{Forward Pass}\\
        Bayes Update using $\alpha$\\
    };
    
    \node[align=left, font=\scriptsize, color=red!60!black, anchor=west] (riccati_txt) at (14.8, -2.5) {
        $\leftarrow$ \textbf{Backward Pass}\\
        Tree Riccati / Saddle\\
        Aggregating Values
    };
    
    \begin{pgfonlayer}{background}
         \node[draw=orange!30, fill=orange!5, rounded corners=5pt, fit=(root)(n11)(n22)(bayes_txt)(riccati_txt), 
         inner sep=14pt, label={[font=\bfseries\large, orange!60!black]above:B. Inner Belief Tree}] (tree_bg) {};
    \end{pgfonlayer}

    
    \draw[zoomline] (inner_macro.north east) -- (tree_bg.north west);
    \draw[zoomline] (inner_macro.south east) -- (tree_bg.south west);

\end{tikzpicture}%
}
\caption{Bilevel structure used in the solver. (A) The outer level optimizes belief-splitting coefficients $\alpha$ (parameterized by logits $\phi$). (B) Given $\alpha$, the inner level constructs the public belief tree (forward Bayes propagation), computes belief-averaged costs, and evaluates the induced tree of LQ saddle subgames via a tree-structured Riccati recursion, returning the root value and its gradient.}
\label{fig:consolidated_bilevel}
\end{figure*}

\subsection{Game Setup and Information}
We consider LQ $G$ on a finite time horizon $[0,T]$, and approximate its NE through time discretization with a time step
$\tau>0$. Let $K\coloneqq T/\tau$ and $t_k\coloneqq k\tau$ for $k\in[K]$.
At $t=0$, a payoff type $\theta$ is drawn from
$\Theta=\{\theta_i\}_{i\in[I]}$ with a public prior $p_0\in\DeltaTheta$.
P1 observes $\theta$; P2 only knows $p_0$.
In continuous time, the joint state $x(t)\in\R^n$ evolves according to linear dynamics
\[
\dot{x}(t)=A_cx(t)+B_{1,c}u(t)+B_{2,c}v(t),
\]
where $u(t)\in\R^{m_1}$ is P1's control and $v(t)\in\R^{m_2}$ is P2's control.
The corresponding discrete-time dynamics is:
\begin{equation}
x_{k+1}=A x_k + B_1 u_k + B_2 v_k,
\label{eq:dynamics}
\end{equation}
with $A=I+\tau A_c$, $B_i=\tau B_{i,c} + A_cB_{i,c}\frac{\tau^2}{2}$ for $i\in[2]$. 
The public belief about $\theta$ is denoted by $p(t)$ in continuous time and $p_k$ in discrete time, and is updated by Bayes' rule.
Throughout the game both players observe the state, action, and belief trajectories.

\subsection{Strategy and Payoff Structures}
We denote by $\{\mathcal{H}_r^i\}^I$ the joint sets of $I$ behavioral strategies of P1\footnote{We define ``behavioral strategy'' as the time-delayed non-anticipative strategies standard for differential games~\cite{elliott1972existence}.}.
A type-$\theta_i$ behavioral strategy $\eta_i \in \mathcal{H}_r^i$ maps the current public information
(e.g., $(t,x,p)$ in continuous time or $(k,x_k,p_k)$ in discrete time) to a probability measure over $\R^{m_1}$.
Similarly, P2's behavioral strategy $\zeta \in \mathcal{Z}_r$ produces a probability measure over $\R^{m_2}$.
The subscript $r$ highlights the randomness of behavioral strategies.

Let $X_T$ be the random terminal state induced by $(\eta_i,\zeta)$ under the continuous-time dynamics.
Type-$\theta_i$ P1 minimizes
\begin{equation}\label{eq:cont_cost}
    J_{\theta_i}(t_0, x_0) := \mathbb{E}_{\eta_i, \zeta}\left[g_{\theta_i}(X_T) + \int_{t_0}^{T}\ell_{\theta_i}(\eta_i(s), \zeta(s))\;ds\right],
\end{equation}
where the running cost is quadratic and strictly convex-concave in controls
\begin{equation}
\ell_{\theta_i}(u,v)=\frac12 u^\top R_i u \;-\;\frac12 v^\top S_i v,
\quad
R_i\succ 0,\;S_i\succ 0,
\label{eq:stage_cost}
\end{equation}
and the terminal cost is convex in P1's state and concave in P2's:
\begin{equation}
g_{\theta_i}(x)=\frac12 x^\top Q_i x + q_i^\top x + c_i,
\;\; Q_i=Q_i^\top,\;q_i\in\R^n,\;c_i\in\R.
\label{eq:terminal_cost_general}
\end{equation}

P2 maximizes the expected payoff $J(t_0,x_0,p_0)=\mathbb{E}_{\theta\sim p_0}[J_{\theta}(t_0,x_0)]$.
The upper and lower values are defined as $V^+(t_0,x_0,p_0)=\inf_{\{\eta_i\}}\sup_{\zeta}J$ and
$V^-(t_0,x_0,p_0)=\sup_{\zeta}\inf_{\{\eta_i\}}J$, respectively.
The game has a value $V$ if $V = V^+(t_0,x_0,p_0) = V^-(t_0,x_0,p_0)$.
Existence of $V$ in $G$ is proved in \cite{cardaliaguet2009numerical}, notably under the Isaacs' condition which holds for LQ $G$.
\cite{ghimire2025solvingfootballexploitingequilibrium} develops a discrete-time primal-dual reformulation of the game to approximate NE strategies.
Specifically, P1 solves a Stackelberg primal game where P2 best responses to P1 via the following Bellman backup:
\begin{equation}\label{eq:primal}
\begin{aligned}
    V_\tau(k, x_k, p_k) &= \min_{\{\eta_i\} \in \{\mathcal H_r\}^I}\mathbb{E}_{\theta_i \sim p_k, u_k \sim \eta_i}\Big[\max_{v_k}V_\tau(k+1,\\
    &\hspace{1in} x_{k+1}, p_{k+1}) + \tau \ell_{\theta_i}(u_k, v_k)\Big],
\end{aligned}
\end{equation}
with boundary $V_\tau(K, x, p) = \sum_{\theta_i}p_{\theta_i}g_{\theta_i}(x)$. Here $p_{k+1}$ is obtained after observing action $u$: $p_{k+1}(i)\propto p_k(i)\,\Prob(u\mid \theta=\theta_i)$ under P1's strategy set $\{\eta_i\}$. 

The rest of the paper focuses on efficiently solving \eqref{eq:primal} at all game tree nodes.
Specifically, we approximate $\{\eta_i\}$ by separating (i) a signaling component that governs the evolution of the public belief (i.e., how $p_k$ is split into posteriors $p_{k+1}$), and (ii) a control component that specifies the corresponding optimal state-feedback actions given the public belief. The next subsection makes this separation explicit via an $I$-branch public belief-tree representation.


\subsection{Reparameterization of $\{\eta_i\}$}
Our goal is to decompose $\{\eta_i\}$ into variables for signaling and those for control.
We start with a critical property of $G$: It is proved that P1's NE strategies in \eqref{eq:primal} are \emph{atomic}~\cite{ghimire2025solvingfootballexploitingequilibrium}: $\eta_i$ for $i\in[I]$ randomizes over at most $I$ common action prototypes at each tree node, and 
thus the primal game can be considered a tree-structured minimax problem, with each node consisting of $I$ branches. See visualization in Fig.~\ref{fig:consolidated_bilevel}B for $I=2$ and $K=2$. 

Let $\cA \coloneqq \{1,2,\dots,I\}$ be the index set for action prototypes, and $\omega\in\cA^k$ denote P1's length-$k$ action sequence. Each tree node can be uniquely identified by $(k,\omega)$.
At each node $(k,\omega)$ and for each private type $\theta_i$, P1 selects the next action index $a\in\cA$ according to a distribution specified by $\eta_i$:
\begin{equation}
\begin{aligned}
&\alpha_{k,\omega,i}^a \;=\;\Prob(a_k=a \mid \theta=\theta_i,\; \text{public node }(k,\omega)),\\
&\alpha_{k,\omega,i}^a\ge 0,\;\;
\sum_{a\in\cA}\alpha_{k,\omega,i}^a=1.
\end{aligned}
\label{eq:alpha_def}
\end{equation}
We call the collection $\alpha=\{\alpha_{k,\omega,i}^a\}$ the signaling policy as it governs how the public belief evolves: 
Given a public belief $p\in\DeltaTheta$ at node $(k,\omega)$, the probability of observing prototype $a$ is
\begin{equation}
\lambda_{k,\omega}^a(p)\coloneqq \sum_{i=1}^I p_i\,\alpha_{k,\omega,i}^a.
\label{eq:lambda}
\end{equation}
After observing $a$, Bayes' rule yields the posterior belief at child node $(k\!+\!1,\omega a)$:
\begin{equation}
p_{k+1,\omega a}(i)=\frac{\alpha_{k,\omega,i}^a\,p_{k,\omega}(i)}{\lambda_{k,\omega}^a(p_{k,\omega})}
\qquad
(\lambda_{k,\omega}^a(p_{k,\omega})>0).
\label{eq:bayes}
\end{equation}
As a result, belief $p$ across the game tree follows a martingale determined by $\alpha$: 
$\sum_{a\in\cA}\lambda_{k,\omega}^a(p_{k,\omega})\,p_{k+1,\omega a}=p_{k,\omega}$.

Now we describe the LQ game for fixed $\alpha$. Denote the value at $(k,\omega)$ as $V_{k,\omega}(x)$.
At leaves, we have:
\begin{equation}
V_{K,\omega}(x)=\sum_{i=1}^I p_{K,\omega}(i)\,g_i(x).
\label{eq:terminal_value}
\end{equation}
Consider an edge $(k,\omega,a)$ from node $(k,\omega)$ to child $(k\!+\!1,\omega a)$.
Conditioned on observing $a$, the posterior belief is $p_{k+1,\omega a}$ by \eqref{eq:bayes}.
Taking expectation of the type-dependent running cost \eqref{eq:stage_cost} under this posterior yields
a standard quadratic stage cost with belief-averaged matrices
\begin{equation}
\bar R_{k,\omega}^a \;\coloneqq\; \sum_{i=1}^I p_{k+1,\omega a}(i)\,R_i,
\quad
\bar S_{k,\omega}^a \;\coloneqq\; \sum_{i=1}^I p_{k+1,\omega a}(i)\,S_i.
\label{eq:running_avg}
\end{equation}
Let $\lambda_{k,\omega}^a := \lambda_{k,\omega}^a(p_{k,\omega})$ for brevity.
The edge value conditioning on $a$ solves an LQ subgame:
\begin{equation}
\begin{aligned}
V_{k,\omega}^a(x)
\;\coloneqq\;
\min_{u\in\R^{m_1}}&\max_{v\in\R^{m_2}}
\Big\{
\tau\Big(\tfrac12 u^\top \bar R_{k,\omega}^a\,u \;-\;\tfrac12 v^\top \bar S_{k,\omega}^a\,v\Big)\\
&+\,V_{k+1,\omega a}\!\big(Ax+B_1u+B_2v\big)
\Big\}.
\end{aligned}
\label{eq:edge_value_def}
\end{equation}
Since Isaacs' condition holds for LQ $G$, 
each edge subproblem \eqref{eq:edge_value_def} has a unique solution. This will be formalized in Theorem~\ref{prop:quad_closure_tree}.
The node value is an expectation over branches:
\begin{equation}
V_{k,\omega}(x)
=\sum_{a\in\cA}\lambda_{k,\omega}^a\,V_{k,\omega}^a(x),
\quad k=0,\dots,K-1.
\label{eq:tree_bellman}
\end{equation}

To summarize, the equilibrium computation of \eqref{eq:primal} across the game tree can be viewed as a bilevel program:
the outer level chooses the signaling $\alpha$, while the inner
level solves the induced  game~\eqref{eq:terminal_value}--\eqref{eq:tree_bellman} to define an evaluation map
$\alpha \mapsto V_{0,\emptyset}(x_0, p_0)$.
With initial state $x_0$ and prior $p_0$, the outer signaling problem is
\begin{equation}
\min_{\alpha}\; V_{0,\emptyset}(x_0, p_0)
\quad
\text{s.t.}\quad
\alpha_{k,\omega,i}\in\Delta(\cA)\;\;\forall k,\omega,i.
\label{eq:outer_problem}
\end{equation}
In Section~\ref{sec:riccati} we show how to (i) evaluate $V_{0,\emptyset}(x_0,p_0)$
efficiently for a fixed $\alpha$ via a tree-structured Riccati recursion, and (ii) differentiate through
this evaluation to optimize $\alpha$.

\section{DIFFERENTIABLE RICCATI SOLVER}
\label{sec:riccati}

We now describe a differentiable tree-structured Riccati recursion composed of
(1) \emph{edge steps} that solve the local LQ games \eqref{eq:edge_value_def} to obtain feedback laws and edge values $V_{k,\omega}^a$ given the values at child nodes $V_{k+1,\omega a}$, and  
(2) \emph{node steps} \eqref{eq:tree_bellman} that average edge values using branch probabilities $\lambda_{k,\omega}^a$ to obtain the \emph{quadratic} value at parent node.

\subsection{Quadratic Value Representation on Nodes}
\label{subsec:node_quadratic}
A key enabler of the Riccati recursion is Theorem~\ref{prop:quad_closure_tree}, which states that node values are quadratic and convex in state:
\begin{equation}
V_{k,\omega}(x)=\frac12 x^\top P_{k,\omega}x + r_{k,\omega}^\top x + c_{k,\omega},
\label{eq:value_quad}
\end{equation}
where coefficients $(P_{k,\omega},r_{k,\omega},c_{k,\omega})$ depend on $p_{k,\omega}$ (hence on $\alpha$) and can be computed by backward induction.

\begin{theorem}[Quadratic closure on the public tree]
\label{prop:quad_closure_tree}
\textit{For fixed $\alpha$, $V_{k,\omega}(\cdot)$ at any $(k,\omega)$ is quadratic convex in state. }
\end{theorem}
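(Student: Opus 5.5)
The plan is backward induction on $k$ from $K$ down to $0$, proving at each node that $V_{k,\omega}$ has the form \eqref{eq:value_quad} with $P_{k,\omega}\succeq 0$. The induction alternates the two operations that build the tree values: the edge step \eqref{eq:edge_value_def} and the node step \eqref{eq:tree_bellman}.

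\emph{Base case.} At a leaf, \eqref{eq:terminal_value} gives $V_{K,\omega}(x)=\tfrac12 x^\top\big(\sum_i p_{K,\omega}(i)Q_i\big)x+\big(\sum_i p_{K,\omega}(i)q_i\big)^\top x+\sum_i p_{K,\omega}(i)c_i$. This is quadratic, and it is convex provided $\sum_i p_{K,\omega}(i)Q_i\succeq 0$. I read the standing terminal-cost hypothesis as giving $Q_i\succeq 0$, so any convex combination of the $Q_i$ is also PSD.

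\emph{Edge step.} Fix an edge $(k,\omega,a)$ and assume $V_{k+1,\omega a}$ is quadratic with $P:=P_{k+1,\omega a}\succeq 0$. Write $\bar R=\bar R^a_{k,\omega}\succ 0$ and $\bar S=\bar S^a_{k,\omega}\succ 0$; these are convex combinations of positive definite matrices. The objective in \eqref{eq:edge_value_def},
\[
f(x,u,v)=\tfrac{\tau}{2}u^\top\bar R u-\tfrac{\tau}{2}v^\top\bar S v+V_{k+1,\omega a}(Ax+B_1u+B_2v),
\]
is a quadratic in $(x,u,v)$.

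In $u$ it has Hessian $\tau\bar R+B_1^\top PB_1\succ 0$, so it is strictly convex. In $v$ it has Hessian $-\tau\bar S+B_2^\top PB_2$. The edge subproblem is well posed exactly when this matrix is negative definite. I would make this the precise content of the Isaacs/well-posedness assumption invoked before \eqref{eq:tree_bellman}: P2's control penalty dominates its influence on the continuation value. Under it, $f$ is strictly convex–concave in $(u,v)$. Setting the block gradient to zero gives the linear system $M\binom{u}{v}=-[B_1\;B_2]^\top(PAx+r)$, where
\[
M=\begin{pmatrix}\tau\bar R&0\\0&-\tau\bar S\end{pmatrix}+[B_1\;B_2]^\top P[B_1\;B_2].
\]
$M$ is invertible, because its diagonal blocks are definite of opposite signs; invertibility then follows from a Schur-complement argument. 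The saddle point is therefore unique, with affine feedback $u^*=-K_1x-k_1$ and $v^*=-K_2x-k_2$. Substituting back shows $V^a_{k,\omega}$ is quadratic, with the Riccati-type update $P^a=A^\top\big(P-P[B_1\;B_2]M^{-1}[B_1\;B_2]^\top P\big)A$ and analogous affine updates for $r$ and $c$.

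\emph{Convexity of the edge value.} This is where I expect the main obstacle, because the Riccati formula for $P^a$ is indefinite-looking and hard to sign directly. I would avoid the formula and argue structurally instead. For each fixed $v$, the map $(x,u)\mapsto f(x,u,v)$ is jointly convex: it is a convex quadratic in $u$ plus the convex function $V_{k+1,\omega a}$ composed with an affine map. Hence $h(x,u):=\max_v f(x,u,v)$, a pointwise supremum of jointly convex functions, is jointly convex in $(x,u)$. This supremum is attained and finite, by concavity in $v$. Finally $V^a_{k,\omega}(x)=\min_u h(x,u)$ is the partial minimization of a jointly convex function and is finite, since $h$ is strictly convex in $u$. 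Partial minimization preserves convexity, so $V^a_{k,\omega}$ is convex and $P^a\succeq 0$. The min–max order in \eqref{eq:edge_value_def} is exactly the order this argument needs; equivalently, uniqueness of the saddle point makes the order irrelevant.

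\emph{Node step.} By \eqref{eq:tree_bellman}, $V_{k,\omega}=\sum_a\lambda^a_{k,\omega}V^a_{k,\omega}$ with $\lambda^a_{k,\omega}\ge 0$. This gives $P_{k,\omega}=\sum_a\lambda^a P^a\succeq 0$, and similarly $r_{k,\omega}=\sum_a\lambda^a r^a$ and $c_{k,\omega}=\sum_a\lambda^a c^a$. A nonnegative combination of convex quadratics is a convex quadratic, which closes the induction.

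One technicality remains: edges with $\lambda^a_{k,\omega}=0$ have an undefined posterior in \eqref{eq:bayes}. These edges contribute nothing to the node value, so I would simply drop them from the sum, or assign them any belief in $\DeltaTheta$.
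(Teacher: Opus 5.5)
Your proof is correct under your reading $Q_i\succeq 0$. Its skeleton matches the paper's: backward induction, first-order conditions for the edge saddle, invertibility of the stacked Hessian, the Riccati update, and $\lambda$-averaging at nodes. You differ genuinely in the two analytic ingredients.

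\textbf{Well-posedness.} The paper uses no sign information on $P^+$. It lower-bounds $H_{uu}$ and upper-bounds $H_{vv}$ using only $\|P^+\|$. It then gets strict convexity--concavity for $\tau<\tau^*$, assuming $\|B_j(\tau)\|\le\beta_j\tau$ and a uniform bound $\bar P$ on $\|P_{k,\omega}\|$. You instead use $P^+\succeq 0$ to get $H_{uu}\succ 0$ for free, and you postulate $H_{vv}\prec 0$. The $v$-part of the paper's $\tau^*$ bound is a checkable sufficient condition for your assumption.

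\textbf{Convexity.} The paper never carries convexity past the leaves. Its induction only establishes the quadratic form via \eqref{eq:edge_P}--\eqref{eq:node_aggregate}, so ``convex'' is effectively asserted. Your argument (pointwise supremum over $v$, then partial minimization over $u$) actually proves $P^a_{k,\omega}\succeq 0$ without having to sign the Riccati expression.

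\textbf{Trade-off: scope.} The paper's norm-based argument gives quadratic closure even for the signed terminal costs $Q_i=\mathrm{diag}(Q_i^{(1)},-Q_i^{(2)})$ used in its case studies. There the value is convex in P1's state and concave in P2's, not convex, and your joint-convexity step in $(x,u)$ would fail. Your argument needs $Q_i\succeq 0$. That is precisely the hypothesis under which the word ``convex'' in the statement is true, and it matches the paper's own base-case appeal to $g_i$ being convex in state.
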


\begin{proof}
We prove by induction.
At a leaf node $(K,\omega)$ \eqref{eq:terminal_value}, since $g_i$ is quadratic and strictly convex in state, $V_{K,\omega}$ is also quadratic and convex. Using \eqref{eq:terminal_cost_general}, leaf coefficients are
\begin{equation}
\begin{aligned}
P_{K,\omega}&=\sum_{i=1}^I p_{K,\omega}(i)\,Q_i,\qquad
r_{K,\omega}=\sum_{i=1}^I p_{K,\omega}(i)\,q_i,\\
c_{K,\omega}&=\sum_{i=1}^I p_{K,\omega}(i)\,c_i.
\end{aligned}
\label{eq:leaf_params}
\end{equation}

For induction, fix an edge $(k,\omega,a)$ with child node $(k\!+\!1,\omega a)$.
Let the child value be
\begin{align*}
&V_{k+1,\omega a}(x)=\frac12 x^\top P^+x + r^{+\top}x + c^+,
\quad
P^+\coloneqq P_{k+1,\omega a}, \\
&\hspace{1in}r^+\coloneqq r_{k+1,\omega a},\quad
c^+\coloneqq c_{k+1,\omega a}.
\end{align*}
On this edge, the posterior belief-averaged running matrices are $\bar R\coloneqq \bar R_{k,\omega}^a$
and $\bar S\coloneqq \bar S_{k,\omega}^a$ from \eqref{eq:running_avg}. The edge value
\eqref{eq:edge_value_def} can therefore be written as
\begin{equation}
\begin{aligned}
V_{k,\omega}^a(x)
=
\min_{u}\max_{v}\;
&\Big\{
\tfrac12 u^\top (\tau\bar R)\,u
-\tfrac12 v^\top (\tau\bar S)\,v \\
& + \tfrac12 x^{+\top}P^+x^+ + r^{+\top}x^+ + c^+
\Big\},
\end{aligned}
\label{eq:edge_problem_explicit}
\end{equation}
where $x^+=Ax+B_1u+B_2v$.

Define the stacked input $w=\begin{bmatrix}u&v\end{bmatrix}^\top$ and the input matrix
$B=[B_1\;\;B_2]$. The saddle stage Hessian is
\begin{equation}
\bar R_w \;\coloneqq\;
\begin{bmatrix}
\tau\bar R & 0\\
0 & -\tau\bar S
\end{bmatrix}.
\label{eq:Rw_def}
\end{equation}
Then the objective inside \eqref{eq:edge_problem_explicit} becomes a quadratic function of $(w,x)$:
\begin{equation}
\begin{aligned}
\Phi_{k,\omega}^a(w;x)
&=
\frac12 w^\top \bar R_w w
+\frac12 (Ax+B w)^\top P^+ (Ax+B w) \\
&\hspace{1in}+ r^{+\top}(Ax+B w) + c^+.
\end{aligned}
\label{eq:Phi_def}
\end{equation}
Expanding \eqref{eq:Phi_def} and collecting terms in $w$ yields
\begin{equation}
\begin{aligned}
\Phi_{k,\omega}^a(w;x)
=
&\;\frac12 w^\top\underbrace{\big(\bar R_w + B^\top P^+ B\big)}_{H_{k,\omega}^a}w\\
&\;+\; w^\top B^\top(P^+A x + r^+)\\
&\;+\;\frac12 x^\top A^\top P^+A x \;+\; (A^\top r^+)^\top x \;+\; c^+.
\end{aligned}
\label{eq:Phi_expanded}
\end{equation}
We first verify that the minimax problem is well-posed and admits a \emph{unique} saddle.
Let $\|\cdot\|$ denote the Euclidean norm for vectors and the induced (spectral) operator norm for matrices, and since $P^+=P^{+\top}$,
$\|P^+\|=\max_i|\lambda_i(P^+)|$, where $\lambda(M)$ is the eigenvalue of $M$.  From \eqref{eq:Phi_expanded},
\[
H_{k,\omega}^a=
\begin{bmatrix}H_{uu}&H_{uv}\\ H_{uv}^\top&H_{vv}\end{bmatrix},
\]
with $H_{uu}=\tau\bar R+B_1^\top P^+B_1,\quad
H_{vv}=-\tau\bar S+B_2^\top P^+B_2$.
Since $\bar R,\bar S$ are posterior convex combinations of $\{R_i\succ0\}$ and $\{S_i\succ0\}$,
$\lambda_{\min}(\bar R)\ge \underline r:=\min_i\lambda_{\min}(R_i)>0$ and
$\lambda_{\min}(\bar S)\ge \underline s:=\min_i\lambda_{\min}(S_i)>0$. Thus, for any $\delta u,\delta v$,
\begin{align}
\delta u^\top H_{uu}\delta u
&\ge \big(\tau\underline r-\|P^+\|\,\|B_1\|^2\big)\|\delta u\|^2,
\label{eq:Huu_bd}\\
\delta v^\top H_{vv}\delta v
&\le \big(-\tau\underline s+\|P^+\|\,\|B_2\|^2\big)\|\delta v\|^2,
\label{eq:Hvv_bd}
\end{align}
with $\delta u^\top\bar R\delta u\ge \underline r\|\delta u\|^2$, $y^\top P^+y\ge-\|P^+\|\|y\|^2$, and
$\|B_j\delta\|\le\|B_j\|\,\|\delta\|$.
Let $\bar P:=\sup_{k,\omega}\|P_{k,\omega}\|<\infty$ and assume a standard discretization so that
$\|B_j(\tau)\|\le \beta_j\tau$ for all $\tau\in(0,\tau_0]$ and some $\beta_j,\tau_0>0$. Then
$H_{uu}\succ0$ and $H_{vv}\prec0$ for any $\tau\in(0,\tau^*)$, where
\[
\tau^*:=\min\left\{\frac{\underline r}{\bar P\beta_1^2},\ \frac{\underline s}{\bar P\beta_2^2},\ \tau_0\right\}.
\]
Hence $\Phi_{k,\omega}^a(\cdot;x)$ is strictly convex in $u$ and strictly concave in $v$, and the edge game
\eqref{eq:edge_problem_explicit} admits a unique saddle point.

Therefore, we have
\[
\nabla_w \Phi_{k,\omega}^a(w;x)
=
H_{k,\omega}^a\,w + B^\top(P^+A x + r^+) = 0.
\]
The unique saddle point is then the following:
\begin{equation}
\begin{aligned}
&w_{k,\omega}^{a*}(x)=\mathcal K_{k,\omega}^a x + \kappa_{k,\omega}^a,
\;\;
\mathcal K_{k,\omega}^a = -\big(H_{k,\omega}^a\big)^{-1}B^\top P^+ A,
\quad \\
&\kappa_{k,\omega}^a = -\big(H_{k,\omega}^a\big)^{-1}B^\top r^+.
\label{eq:edge_feedback}
\end{aligned}
\end{equation}
Partitioning $w=\begin{bmatrix}u&v\end{bmatrix}^\top$ gives the edge-specific affine feedback laws
\begin{align}
u_{k,\omega}^{a*}(x)&=\mathcal K_{u,k,\omega}^a x+\kappa_{u,k,\omega}^a,
\label{eq:op_u}\\
v_{k,\omega}^{a*}(x)&=\mathcal K_{v,k,\omega}^a x+\kappa_{v,k,\omega}^a,
\label{eq:op_v}
\end{align}
where $(\mathcal K_u,\kappa_u)$ and $(\mathcal K_v,\kappa_v)$ are the blocks of $(\mathcal K,\kappa)$.
Let $b(x)\;\coloneqq\;B^\top(P^+A x + r^+)$ be the affine term in \eqref{eq:Phi_expanded}. Then \eqref{eq:Phi_expanded} can be written as
\begin{align*}
\Phi_{k,\omega}^a(w;x)
&=
\frac12 (w + H^{-1}b(x))^\top H (w + H^{-1}b(x))\\
&-\frac12 b(x)^\top H^{-1}b(x)
+\frac12 x^\top A^\top P^+A x \\
&\hspace{1.13in}+ (A^\top r^+)^\top x + c^+,
\end{align*}
where $H\equiv H_{k,\omega}^a$. At the saddle point $$w=w^*(x)=-H^{-1}b(x),$$ the squared term vanishes and we obtain
\begin{equation}
\begin{aligned}
V_{k,\omega}^a(x)
&=
\frac12 x^\top A^\top P^+A x \;+\; (A^\top r^+)^\top x \;+\; c^+\\
&\hspace{1.08in}-\;\frac12\,b(x)^\top \big(H_{k,\omega}^a\big)^{-1} b(x).
\end{aligned}
\label{eq:edge_value_complete_square}
\end{equation}

Expanding the last term with $b(x)=B^\top(P^+A x + r^+)$ and collecting quadratic, linear, and constant parts in $x$
gives the quadratic coefficients of the edge value:
\begin{align}
P_{k,\omega}^a &=
A^\top P^+A - A^\top P^+B\big(H_{k,\omega}^a\big)^{-1}B^\top P^+A,
\label{eq:edge_P}\\
r_{k,\omega}^a &=
A^\top r^+ - A^\top P^+B\big(H_{k,\omega}^a\big)^{-1}B^\top r^+,
\label{eq:edge_r}\\
c_{k,\omega}^a &=
c^+ - \frac12 r^{+\top}B\big(H_{k,\omega}^a\big)^{-1}B^\top r^+.
\label{eq:edge_c}
\end{align}
Returning to the node recursion \eqref{eq:tree_bellman},
since each $V_{k,\omega}^a(x)$ is quadratic in $x$, the node coefficients are simply the $\lambda$-weighted averages:
\begin{equation}
\begin{aligned}
P_{k,\omega}&=\sum_{a\in\cA}\lambda_{k,\omega}^a\,P_{k,\omega}^a,\qquad
r_{k,\omega}=\sum_{a\in\cA}\lambda_{k,\omega}^a\,r_{k,\omega}^a,\\
c_{k,\omega}&=\sum_{a\in\cA}\lambda_{k,\omega}^a\,c_{k,\omega}^a.
\end{aligned}
\label{eq:node_aggregate}
\end{equation}

\end{proof}

\begin{remark}
\label{rem:no_tau_threshold}
In our case studies, the dynamics are player-separable: $x = [x^{(1)}, x^{(2)}]$, 
$A=\texttt{diag}(A_1,A_2)$, $B_1=[\tilde B_1;0]$, $B_2=[0;\tilde B_2]$, and the terminal quadratic has the signed block form
$Q_i=\texttt{diag}(Q_i^{(1)},-Q_i^{(2)})$ with $Q_i^{(1)},Q_i^{(2)}\succeq 0$.
The tree recursion preserves $P_{k,\omega}=\texttt{diag}(P^{(1)}_{k,\omega},-P^{(2)}_{k,\omega})$ with
$P^{(1)}_{k,\omega},P^{(2)}_{k,\omega}\succeq 0$, hence $H_{uv}=B_1^\top P^+B_2=0$ and
$H_{uu}=\tau\bar R + \tilde B_1^\top P^{(1)}\tilde B_1\succ0$, $H_{vv}=-\tau\bar S-\tilde B_2^\top P^{(2)}\tilde B_2\prec0$
for all $\tau>0$. Therefore every edge subproblem is strictly convex--concave and has a unique saddle without requiring $\tau<\tau^*$.
\end{remark}

\subsection{The Solver}
\label{subsec:node_aggregation}
For fixed $\alpha$, the tree Riccati recursion has two steps: The forward (Bayes) pass propagates beliefs $\{p_{k,\omega}\}$ and branch weights $\{\lambda_{k,\omega}^a\}$ using \eqref{eq:lambda}-\eqref{eq:bayes}, and compute averaged edge matrices $\{\bar R_{k,\omega}^a,\bar S_{k,\omega}^a\}$ via \eqref{eq:running_avg}. Then backward (Riccati) pass initializes leaf quadratics by \eqref{eq:leaf_params}; then for $k=K-1,\dots,0$ computes edge updates \eqref{eq:edge_feedback}-\eqref{eq:edge_c} and aggregate to nodes via \eqref{eq:node_aggregate}.





To enforce simplex constraints in \eqref{eq:alpha_def}, we parameterize $\alpha$ via logits $\phi$ and a softmax:
\begin{equation}
\alpha_{k,\omega,i}^a = \frac{\exp(\phi_{k,\omega,i}^a)}{\sum_{b\in\cA}\exp(\phi_{k,\omega,i}^b)}.
\label{eq:softmax}
\end{equation}
Alternatively, $\alpha$ can also be modeled by a neural network $F: \mathbb{R}^n \times \Delta (\Theta) \rightarrow \alpha$.
Given logits $\phi$, we compute $\alpha(\phi)$ and run the tree Riccati recursion to obtain the root quadratic $(P_{0,\emptyset},r_{0,\emptyset},c_{0,\emptyset})$. The scalar objective is then
\begin{equation}
\mathcal{L}(\phi):=V_{0,\emptyset}(x_0, p_0)
=\frac12 x_0^\top P_{0,\emptyset}x_0 + r_{0,\emptyset}^\top x_0 + c_{0,\emptyset}.
\label{eq:loss}
\end{equation}
All operations are differentiable, and thus gradients $\nabla_\phi \mathcal{L}$ can be obtained via automatic differentiation and used in first- or second-order optimizers. We can further accelerate the optimization by leveraging adjoint-enabled gradient descent \cite{amos2018differentiable}. The solver is summarized in Algorithm~\ref{alg:solver}.

\begin{algorithm}[!h]
\caption{LQ $G$ Solver}
\label{alg:solver}
\begin{algorithmic}[1]
\STATE \textbf{Input:} Game data $\{R_i,S_i,Q_i,q_i,c_i\}_{i=1}^I$, $(A,B_1,B_2)$, horizon $K$, time step $\tau$, initial state $x_0$, prior $p_0$.
\STATE Initialize logits $\{\phi_{k,\omega,i}^a\}$, choose optimizer and step size.
\FOR{outer iterations $t=0,1,2,\dots$}
    \STATE Compute $\alpha(\phi)$ via softmax \eqref{eq:softmax}.
    \STATE \textbf{(Forward)} Build belief tree $\{p_{k,\omega},\lambda_{k,\omega}^a\}$ \eqref{eq:lambda}--\eqref{eq:bayes}.
    \STATE \textbf{(Forward)} Compute $\bar R_{k,\omega}^a,\bar S_{k,\omega}^a$ \eqref{eq:running_avg}.
    \STATE \textbf{(Backward)} Initialize $(P_{K,\omega},r_{K,\omega},c_{K,\omega})$ at leaf \eqref{eq:leaf_params}.
    \FOR{$k=K-1,\dots,0$}
        \FOR{each node $\omega\in\cA^k$ and prototype $a\in\cA$}
            \STATE Compute edge feedback $(K_{k,\omega}^a,\kappa_{k,\omega}^a)$ via \eqref{eq:edge_feedback}.
            \STATE Compute $(P_{k,\omega}^a,r_{k,\omega}^a,c_{k,\omega}^a)$ via \eqref{eq:edge_P}--\eqref{eq:edge_c}.
        \ENDFOR
        \STATE Aggregate to node $(P_{k,\omega},r_{k,\omega},c_{k,\omega})$ via \eqref{eq:node_aggregate}.
    \ENDFOR
    \STATE Compute loss $\mathcal{L}(\phi)=V_{0,\emptyset}(x_0)$ via \eqref{eq:loss}.
    \STATE Backpropagate to obtain $\nabla_\phi \mathcal{L}$ from root to leaf.
    \STATE Optimizer step on $\phi$.
\ENDFOR
\STATE \textbf{Output:} Optimized $\alpha$ and edge feedback $\{K_{k,\omega}^a,\kappa_{k,\omega}^a\}$ to compose $\{\eta_i\}$.
\end{algorithmic}
\end{algorithm}


\vspace{-0.18in}
\section{Case Studies}
\vspace{-0.05in}
We implement the proposed method to solve two LQ $G$s. Both games use 2D double integrator dynamics and have a time horizon of $K=10$ with $\tau = 0.1$. Game-specific parameters are discussed as follows. 
\vspace{-0.08in}
\subsection{Case 1: Hexner's Game}
\vspace{-0.05in}
We use Hexner's game~\cite{hexner1979differential} with an analytical NE to verify the correctness of the learned policies. The game has a type-dependent terminal cost and type-independent running costs. P1 has two potential targets at $(0, \theta)$ with $\theta = \{-1, 1\}$. The terminal cost function is:
\begin{equation}
\label{eq:hexner_terminal}
\begin{aligned}
g_{\theta_i}(x) &= (x^{(1)} - z\theta_i)^\top \tilde{Q}_i(x^{(1)} - z\theta_i) - \\
& \hspace{1in}(x^{(2)} - z\theta_i)^\top \tilde{Q}_i(x^{(2)} - z\theta_i),
\end{aligned}
\end{equation}
where $x^{(1)}$ and $x^{(2)}$ represent the states of P1 and P2. $z = \begin{bmatrix}
    0 & 1 & 0 & 0
\end{bmatrix}^\top$. $\tilde{Q}_i = \tilde{Q} = \texttt{diag}(1, 1, 0, 0)$ is the terminal cost matrix. Rewriting \eqref{eq:hexner_terminal} in the form of \eqref{eq:terminal_cost_general}, we get,
\begin{align*}
&Q_i = Q = 2 \cdot \texttt{diag}(\tilde{Q}, -\tilde{Q}),\\
&q_i = 2\begin{bmatrix}-\theta_i (\tilde{Q}^\top z) & \theta_i (\tilde{Q}^\top z)\end{bmatrix}^\top,\\
&c_i = \theta_i^2 (z^\top \tilde{Q} z - z^\top \tilde{Q}z) = 0
\end{align*}
The control penalty matrices for P1 and P2 are $R = \texttt{diag}(0.05, 0.025)$ and $S = \texttt{diag}(0.05, 0.1)$, respectively. 

\begin{figure}[!htb]
    \centering
    \includegraphics[width=\linewidth]{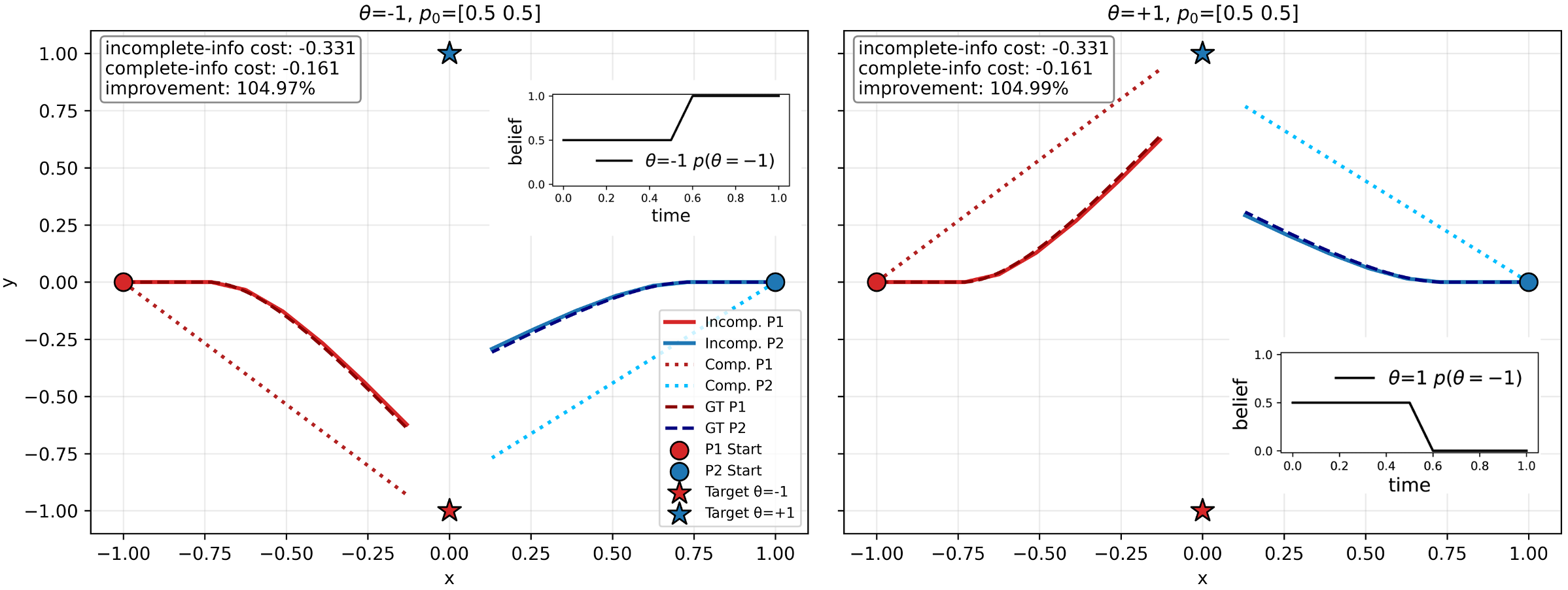}
    \vspace{-0.15in}
    \caption{Comparisons between ground-truth NEs (``GT'') of Hexner's game and learned policies for $\theta = -1$ (left) and $\theta = 1$ (right) under incomplete information. Dotted lines are NEs for the corresponding \textit{complete-information} games. P1 is able to improve significantly by exploiting its information advantage.}
    \label{fig:hexner_gt_comp}
    \vspace{-0.1in}
\end{figure}

The analytical solution for Hexner's game is studied in \cite{
ghimire2025solvingfootballexploitingequilibrium, hexner1979differential}: P1 plays non-revealing (i.e., its strategy is same across types) until a game-dependent critical time $t_r$ when they completely reveal the target. \cite{ghimire2025solvingfootballexploitingequilibrium} reported a cold-solve time of 5 minutes on a Macbook M1 Pro for P1's NE along with P2's best responses. In comparison, our algorithm achieves the same in under 1 minute. Results are visualized in Fig.~\ref{fig:hexner_gt_comp}, where the learned policies match the analytical ground-truth, recovering the critical time of $t_r \approx 0.5$. We also compare the NEs against those for the corresponding complete-information games (where P2 knows P1's targets) along with the value improvement percentages to show the necessity of information control in strategic planning.

\begin{figure*}[t]
\centering
\includegraphics[width=\linewidth]{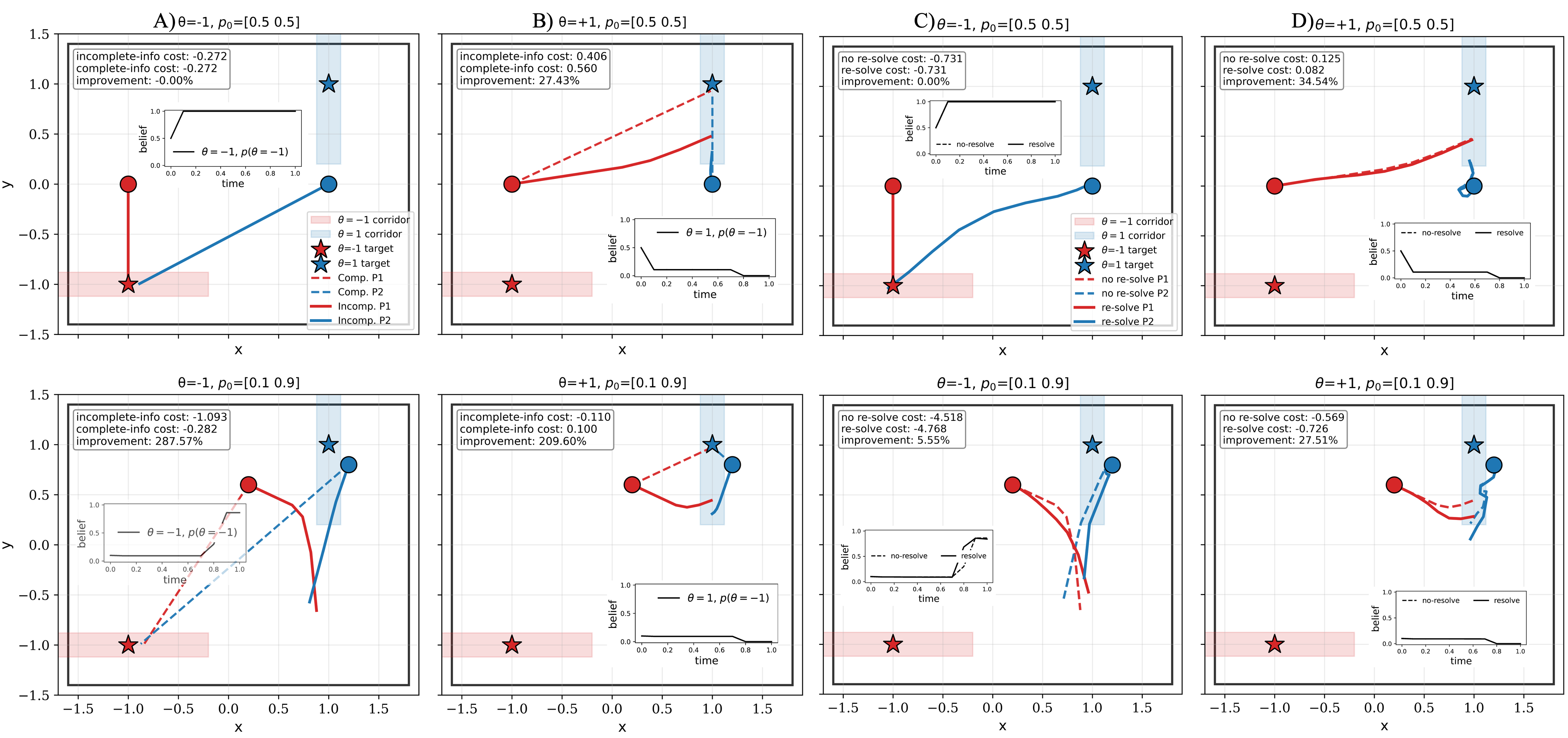}
\vspace{-0.25in}
\caption{Sample trajectories for the drone landing under attack case. A-B: Comparison between complete vs incomplete information games. P1 is better off in expectation when playing the incomplete-info game. Improvements for specific realizations are labeled in the plots. B-C: Comparison between re-solved vs offline policy when P2's dynamics is stochastic. On average, resolving helps P1.}
\label{fig:drone_case}
\vspace{-0.2in}
\end{figure*}
\vspace{-0.05in}
\subsection{Case 2: Drone Landing Under Attack}
\vspace{-0.05in}
In this game, a drone (P1) has two possible landing zones $(-1, -1)$ and $(1, 1)$ in a horizontal and a vertical orientation, and is evading from an adversary (P2). Final penalty matrices are type-dependent and shared by both players: $\tilde{Q}_1 = \texttt{diag}(1, 20, 0, 0)$ and $\tilde{Q}_2 = \texttt{diag}(20, 1, 0, 0)$, i.e., type 1 focuses on homing in $y$ coordinate and type 2 in $x$. Running penalty matrices are type-independent: $R = \texttt{diag}(0.05, 0.025)$, and $S = \texttt{diag}(0.02, 0.04)$. Dynamics, running and terminal cost functions follow Hexner's game, with $z = \begin{bmatrix}
    1 & 1 & 0 & 0
\end{bmatrix}^\top$.
\subsubsection{Optimal policy under nominal conditions}
First we present the result when the deployment and offline training share same condition. We highlight that, unlike in the Hexner's game, the optimal policy in this case is state- and belief-dependent. Fig.~\ref{fig:drone_case}A-B compare NE interactions for complete- and incomplete-info settings for different targets (columns) and different initial states (rows). 
\subsubsection{Optimal policy under stochastic disturbance}
Next, we introduce noise in P2's velocity during deployment, and argue that by re-solving the sub-game at each time-step, P1 can play more robustly. The noise is added from an i.i.d. Normal distribution $\mathcal{N}(0, \Sigma^2)$. To better reflect the underlying problem, P1 considers stochastic LQ during online play, which results in an accumulation of additional cost $\mathbf{tr}(P_k\Sigma)$ due to noise at every time step. However, note that the feedback gain remains unchanged in stochastic LQ with zero-mean noise. We use $\Sigma = \texttt{diag}(0, 0, 0.25, 0.25)$ in our results. 

We compare the cost of the game realized by re-solving the sub-games to that realized without re-solving. We report this difference in addition to the mean resolve time (milliseconds) per time step reported from 1600 runs in Tab.~\ref{tab:cost-table}. Sample trajectories are shown in Fig.~\ref{fig:drone_case}C-D.
\begin{table}[!h]
\vspace{-0.1in}
\centering
\caption{$\Delta$cost of P1 and mean subgame solving time}
\vspace{-0.15in}
\label{tab:cost-table}
\begin{tabular}{@{}ccc@{}}
\toprule
$\Delta$ cost $(\downarrow)$ &  $95\%$ Confidence Interval &resolve time (ms) \\ \midrule 
 $ -0.1051 \pm 0.3656$ & $[-0.1232,\; -0.0877]$ & $ 71.51  \pm 59.57 $ \\ \bottomrule
\end{tabular}
\end{table}
\vspace{-0.2in}
\section{Conclusion}
\vspace{-0.08in}
We provided an efficient gradient-based algorithm for solving the informed player's strategy in two-player zero-sum linear-quadratic differential games with one-sided information by leveraging its equilibrium structure. Future work will explore methods to improve algorithmic efficiency for the uninformed player, which solves a dual game that cannot exploit the LQ structure as much as the primal, yet may leverage the optimal signaling from the latter.
\bibliographystyle{ieeetr}
\bibliography{ref}

\begin{thebibliography}{10}

\bibitem{garcia2018design}
E.~Garcia, D.~W. Casbeer, and M.~Pachter, ``Design and analysis of state-feedback optimal strategies for the differential game of active defense,'' {\em IEEE Transactions on Automatic Control}, vol.~64, no.~2, pp.~553--568, 2018.

\bibitem{garcia2021complete}
E.~Garcia, D.~W. Casbeer, and M.~Pachter, ``The complete differential game of active target defense,'' {\em Journal of Optimization Theory and Applications}, vol.~191, no.~2, pp.~675--699, 2021.

\bibitem{liang2019differential}
L.~Liang, F.~Deng, Z.~Peng, X.~Li, and W.~Zha, ``A differential game for cooperative target defense,'' {\em Automatica}, vol.~102, pp.~58--71, 2019.

\bibitem{durkota2017optimal}
K.~Durkota, V.~Lis{\`y}, C.~Kiekintveld, K.~Hor{\'a}k, B.~Bo{\v{s}}ansk{\`y}, and T.~Pevn{\`y}, ``Optimal strategies for detecting data exfiltration by internal and external attackers,'' in {\em International Conference on Decision and Game Theory for Security}, pp.~171--192, Springer, 2017.

\bibitem{mc2016data}
S.~M. Mc~Carthy, A.~Sinha, M.~Tambe, and P.~Manadhata, ``Data exfiltration detection and prevention: Virtually distributed pomdps for practically safer networks,'' in {\em International Conference on Decision and Game Theory for Security}, pp.~39--61, Springer, 2016.

\bibitem{horak2016point}
K.~Hor{\'a}k and B.~Bo{\v{s}}ansk{\`y}, ``A point-based approximate algorithm for one-sided partially observable pursuit-evasion games,'' in {\em International conference on decision and game theory for security}, pp.~435--454, Springer, 2016.

\bibitem{van2013flipit}
M.~Van~Dijk, A.~Juels, A.~Oprea, and R.~L. Rivest, ``Flipit: The game of “stealthy takeover”,'' {\em Journal of Cryptology}, vol.~26, no.~4, pp.~655--713, 2013.

\bibitem{kyle1985continuous}
A.~S. Kyle, ``Continuous auctions and insider trading,'' {\em Econometrica: Journal of the Econometric Society}, pp.~1315--1335, 1985.

\bibitem{back1992insider}
K.~Back, ``Insider trading in continuous time,'' {\em The Review of Financial Studies}, vol.~5, no.~3, pp.~387--409, 1992.

\bibitem{wang2020market}
X.~Wang and M.~P. Wellman, ``Market manipulation: An adversarial learning framework for detection and evasion,'' in {\em 29th International Joint Conference on Artificial Intelligence}, 2020.

\bibitem{ghimire2025solvingfootballexploitingequilibrium}
M.~Ghimire, L.~Zhang, Z.~Xu, and Y.~Ren, ``Solving football by exploiting equilibrium structure of 2p0s differential games with one-sided information,'' 2025.

\bibitem{ghimirestate}
M.~Ghimire, L.~Zhang, Z.~Xu, and Y.~Ren, ``State-constrained zero-sum differential games with one-sided information,'' in {\em Forty-first International Conference on Machine Learning}.

\bibitem{harsanyi1967games}
J.~C. Harsanyi, ``Games with incomplete information played by “bayesian” players, i--iii part i. the basic model,'' {\em Management science}, vol.~14, no.~3, pp.~159--182, 1967.

\bibitem{aumann1995repeated}
R.~J. Aumann, M.~Maschler, and R.~E. Stearns, {\em Repeated games with incomplete information}.
\newblock MIT press, 1995.

\bibitem{cardaliaguet2007differential}
P.~Cardaliaguet, ``Differential games with asymmetric information,'' {\em SIAM journal on Control and Optimization}, vol.~46, no.~3, pp.~816--838, 2007.

\bibitem{cardaliaguet2009numerical}
P.~Cardaliaguet, ``Numerical approximation and optimal strategies for differential games with lack of information on one side,'' {\em Advances in Dynamic Games and Their Applications: Analytical and Numerical Developments}, pp.~1--18, 2009.

\bibitem{hexner1979differential}
G.~Hexner, ``A differential game of incomplete information,'' {\em Journal of Optimization Theory and Applications}, vol.~28, pp.~213--232, 1979.

\bibitem{elliott1972existence}
R.~J. Elliott and N.~J. Kalton, {\em The existence of value in differential games}, vol.~126.
\newblock American Mathematical Soc., 1972.

\bibitem{amos2018differentiable}
B.~Amos, I.~Jimenez, J.~Sacks, B.~Boots, and J.~Z. Kolter, ``Differentiable mpc for end-to-end planning and control,'' {\em Advances in neural information processing systems}, vol.~31, 2018.

\end{thebibliography}
\end{document}